\documentclass[aps,prd,reprint,twocolumn,tightenlines,eqsecnum,floats,amsmath,amssymb,nofootinbib
,superscriptaddress,showpacs]{revtex4-1}   

\usepackage{amsmath,amsthm,amssymb,amsfonts}
\usepackage{MnSymbol}
\usepackage{color}
\usepackage{colordvi}
\newtheorem{thr}{Theorem}
\newtheorem{prop}[thr]{Proposition}
\newtheorem{lm}[thr]{Lemma}
\newtheorem{df}{Definition}

\numberwithin{equation}{section}
\numberwithin{thr}{section}
\numberwithin{chr}{section}
\numberwithin{df}{section}

\newcommand{\scpr}[2]{\langle#1\, \vert \, #2 \rangle}

\newcommand{\bra}[1]{\langle\, #1\,\rvert}

\newcommand{\gbra}[1]{(\,#1\,\rvert}

\newcommand{\new}{\textrm{vtx}}
\newcommand{\ve}{\text{Vert}}

\DeclareMathOperator{\cyl}{Cyl}

\begin{document}
\title{A symmetric scalar constraint for loop quantum gravity}

\author{Jerzy Lewandowski}\email{Jerzy.Lewandowski@fuw.edu.pl}
\affiliation{Faculty of Physics, Uniwersytet Warszawski,
ul. Pasteura 5, 02-093 Warszawa, Polska (Poland)}
\affiliation{Institute for Quantum Gravity, Department of Physics
Friedrich-Alexander-Universit\"at  Erlangen-N\"urnberg (FAU), Erlangen (Germany)}

\author{Hanno Sahlmann}
\email{hanno.sahlmann@gravity.fau.de}
\affiliation{Institute for Quantum Gravity, Department of Physics
Friedrich-Alexander-Universit\"at  Erlangen-N\"urnberg (FAU), Erlangen (Germany)}

\pacs{4.60.Pp; 04.60.-m; 03.65.Ta; 04.62.+v}
\begin{abstract} 
In the framework of loop quantum gravity, we define a new Hilbert space of states which are solutions of 
a large number of components of the diffeomorphism constraint. On this Hilbert space, using the 
methods 
of Thiemann, we obtain a family of gravitational scalar constraints. They preserve the Hilbert space for every choice of lapse function. Thus 
adjointness and  commutator properties of the constraint can be investigated in a straightforward manner. We show how the space of solutions of the symmetrized constraint can be defined by spectral decomposition, and the Hilbert space of physical states by subsequently fully implementing the diffeomorphism constraint. 
The relationship of the solutions to those resulting from a proposal for a symmetric constraint operator by Thiemann remains to be elucidated.
\end{abstract} 
      
\maketitle

\newpage

\tableofcontents

\section{Introduction} 
\label{se_intro}
In any known canonical formulation of general relativity, the general covariance of the theory is encoded in a number of constraints imposed on phase space. These constraints generate the hypersurface-deformation algebra under Poisson brackets, which is universal for generally covariant theories. 

For a canonical quantization of general relativity, it is thus vital that the constrains are implemented in the quantum theory. For the case of loop quantum gravity (LQG, see \cite{Ashtekar:2004eh,gr-qc/0509064} for a review), the diffeomorphism constraints have been dealt with successfully, resulting in a Hilbert space $\mathcal{H}_{\rm diff}$ off quantum states  that are invariant under spatial diffeomorphisms \cite{Ashtekar:1995zh}. The scalar constraints are technically much more demanding, as they have a more complicated action on the canonical variables employed in LQG. 

Thiemann \cite{gr-qc/9606089, gr-qc/9606090, gr-qc/9705017}, based in part on ideas by Rovelli and Smolin \cite{Rovelli:1993}  and  other researchers  \cite{AL:up}, and using the quantum volume operator \cite{AL:vol0}, succeeded in defining a quantum  scalar constraint $\widehat{C}(N)$. One essential ingredient was the introduction of a regulated version $\widehat{C}_\mathcal{R}(N)$ of the constraint on the kinematic Hilbert space $\mathcal{H}_{\rm kin}$, and the observation, that the regulator can be removed when the action of the regulated constraint is extended to $\mathcal{H}_{\rm diff}$ by duality. 
Due to the presence of the lapse function $N$,  the operator $\widehat{C}(N)$ is not invariant under spatial diffeomorphisms, and hence does not preserve $\mathcal{H}_{\rm diff}$. 
In fact, no Hilbert space which is invariant under $\widehat{C}(N)$ is known. This has turned out to be a substantial obstacle, as it precludes the straightforward discussion of adjointness relations, spectral resolutions, commutator algebra etc. of the $\widehat{C}(N)$'s. 
One way to deal with these difficulties is to work with the regulated constraints $\widehat{C}_\mathcal{R}(N)$ on $\mathcal{H}_{\rm kin}$. Thiemann showed in \cite{gr-qc/9606089} that the commutator of two regulated constraints vanishes. So does the extension of the commutator by the duality to $\mathcal{H}_{\rm diff}$.
It is in this sense that Thiemann's quantization is anomaly free. Remarkably, Thiemann was also able to devise a symmetric regulated constraint on $\mathcal{H}_{\rm kin}$ in \cite{gr-qc/9606090} which is  anomaly free in the same sense. A mathematically exact approach to the issue of commuting two  quantum scalar constraint operators  after removing the regulator was introduced in \cite{gr-qc/9710016} by extending the Hilbert space $\mathcal{H}_{\rm diff}$ to a suitable vector (no longer Hilbert) space  named a "habitat".  
Other ways to deal with the difficulties resulting from the unregulated constraint $\widehat{C}(N)$ not leaving its domain invariant have been suggested, see for example \cite{gr-qc/9705017,gr-qc/9710016}.

While these are acceptable resolutions to some of the problems, there remains some uneasiness due to the fact that  there is no scalar constrain operator without regulator, acting on, and leaving invariant, a Hilbert space. It is here that the present article contains a substantial improvement.
The solution we present has $\widehat{C}(N)$ act on -- and leave invariant -- a new Hilbert space $\mathcal{H}_{\new}$ of \emph{almost} diffeomorphism invariant states. These new states can be thought of as being obtained from the spin network states of LQG by averaging over their images under diffeomorphisms which leave fixed sets of vertices in the spatial manifold invariant, schematically, 
\begin{equation}
\label{eq_ga}
\overline{\Psi_\gamma}\propto\sum_{f \in \text{Diff}(\Sigma)_{v_1\ldots v_n}}U_f \Psi_\gamma
\end{equation}
where $\gamma$ is a graph with vertices $v_1\ldots v_n$, the sum is over elements of the stabilizer of the vertex set, and $U$ is the unitary 
action of the diffeomorphisms.  Thiemann \cite{gr-qc/9606089} defines a regulated operator $\widehat{C}_\mathcal{R}(N)$ and then shows that 
the limit $\mathcal{R}\rightarrow 0$ is well defined in a relatively weak sense, namely on diffeomorphism invariant distributions. Technically, the regulator is similar to that in lattice gauge theory -- curvature is approximated by traces of holonomies. When acting with the regulated operator on the new Hilbert space $\mathcal{H}_{\new}$, the partial group averaging \eqref{eq_ga} is enough to obtain a well defined operator in the limit of  vanishing regulator. On the other hand, mostly due to the nature of spatial volume in loop quantum gravity, the resulting state will still belong to $\mathcal{H}_{\new}$. 

In the new Hilbert space, adjointness and commutator properties of the constraint can be investigated, and a physical Hilbert space defined by using spectral decomposition.  

It is a very interesting -- and open -- question, how solutions to the new scalar constraint on 
$\mathcal{H}_{\new}$ relate to those of Thiemann's symmetric scalar constraint \cite{gr-qc/9606090}. Similarly, one could symmetrize Thiemann's non-symmetric scalar constraint on $\mathcal{H}_{\rm diff}$ for the case of \emph{constant} lapse function and compare to our proposal.

We should note that there is a very interesting different line of thought, \cite{Laddha:2010wp,Laddha:2011mk,Tomlin:2012qz,Varadarajan:2012re}, which also suggests that one should use a different Hilbert space to represent the (diffeomorphism and scalar) constraints. Those methods carry the additional benefit that they address the question of anomalies in a direct fashion. 

The present article is organized as follows. In section \ref{se_kin}, we briefly recall the setup of kinematic quantization in loop quantum gravity. Section \ref{se_new} introduces the new Hilbert space, which is used in sections \ref{se_euc} and \ref{se_lor}, respectively, for the quantization of Euclidean and Lorentzian scalar constraints.  The space of solutions to all constraints is discussed in section \ref{se_sol}. A summary can be found in section \ref{se_sum}. 
\section{Kinematic quantization} 
\label{se_kin}
In this section, for completeness and to fix notation we briefly recall some basic notions of loop quantum gravity. 

\subsection{Classical theory}
In this article, we will consider 4d Einstein gravity in vacuum, given by the action
\begin{equation*}S[\phi,e,\omega]= S_\text{GR}+S_\text{Holst}\end{equation*}
with 
\begin{align}
S_\text{GR}&=\frac{1}{32\pi G}\int \epsilon_{IJKL} e^I\wedge e^J\wedge F^{KL}(\omega)\\
S_\text{Holst}&=-\frac{1}{16\beta G}\int e^I\wedge e^J\wedge F_{IJ}(\omega)
\end{align}
The canonical analysis of this action, and a partial gauge fixing (time gauge) leads to a phase space 
$\Gamma$  for the gravitational field. For a detailed derivation see for example \cite{gr-qc/0110034, Ashtekar:2004eh}. It is coordinatized by the su(2) valued                              
1-form field
\begin{equation*} A(x)\ =\  A_a^i(x) \tau_i \otimes dx^a,\  \end{equation*} 
and the canonically conjugate momentum vector-density
\begin{equation*} E(x)\ =\  E^a_i(x)\tau^{*i}\otimes \partial_a,\ \end{equation*}
taking values on a spatial slice $\Sigma$ of space-time.
 Indices $a,b,\ldots$ are spatial, whereas $i,j,\ldots$ refer to su(2), the algebra of the gauge group after partial gauge fixing. The usual choice of the basis $\tau_1,\tau_2,\tau_3$ is such that
\begin{equation*} 
[\tau_i,\tau_j]=\sum_k \epsilon_{ijk}\tau_{k}.
\end{equation*}
$\tau^{*i}$ denotes the dual basis in su(2)$^*$.

The Poisson bracket between two functionals $F[A,E],\ G[A,E]$ is
\begin{equation*} \{F,G\} = 8\pi G \beta
\int_\Sigma d^3x \frac{\delta F}{\delta A^i_a(x)} \frac{\delta G}{\delta E^a_i(x)}
- \frac{\delta G}{\delta A^i_a(x)} \frac{\delta F}{\delta E^a_i(x)} .
\end{equation*}
The phase space   $\Gamma$ is not yet physical, however. Rather, the physical phase space is induced by constraints on $\Gamma$. The main concern of the present work is the implementation in the quantum theory of the scalar constraint
\begin{equation}\label{vacuum_scalar}
C = \sqrt{\frac{\beta}{8\pi G}}  
\frac{E^a_iE_j^b}{\sqrt{|\det E|}}\left(\epsilon^{ijk}F_{abk}+2(\sigma-\beta^2)K_{[a}^iK_{b]}^j \right)
\end{equation}
is the scalar constraint of vacuum gravity. $F$ is the curvature of $A$ and $K$ is the extrinsic curvature of $\Sigma$, which is a function of $A$ and $E$.  For the Lorentzian gravity $\sigma=-1$. The Euclidean model of gravity is
defined by $\sigma=1$.  
\subsection{Kinematic Hilbert space}
In the present section, we will quantize the kinematic phase space $\Gamma$, resulting in a Hilbert space ${\cal H}$. The quantum states in LQG are \emph{cylindrical} functions of the variable $A$, i.e., they depend on $A$ only through finitely many holonomies 
\begin{equation} h_e[A]\ =\ \rm{Pexp}\left(-\int_e A\right) \end{equation}
where $e$ ranges over finite curves -- we will also refer to them as  \emph{edges} -- in $\Sigma$. 

To spell out the definition we need to be precise about the meaning of ``embedded graph'' used in the definition of the cylindrical function. A graph $\gamma$ embedded in $\Sigma$ is a set of edges (un-oriented) embedded in $\Sigma$, 
$\gamma=\{e_1,\ldots,e_n\}$, of three types:
 
\begin{enumerate}
\item  embedded  closed interval   (two end points),   
\item immersed   interval, such that the endpoints of the image coincide,  and there is no more
self-intersections (one endpoint),
\item embedded  circle (no endpoints). 
\end{enumerate} 
The end points of the edges of a given graph $\gamma$ form the set $\{v_1,\ldots ,v_m\}$ of the vertices 
of $\gamma$. Intersection of two different edges is either  empty or consists of vertices of $\gamma$, 
\begin{equation*}e_I\cap e_J \subset \{v_1,\ldots ,v_m\}, \ \ {\rm whenever}\ \ I\not= J. \end{equation*}
In particular, each edge of the  type 3 (circle) does not intersect any other
edge of $\gamma$. 

\begin{df}
A function  $\Psi:A\mapsto\Psi[A]$ is called \emph{cylindrical} if there is a graph $\gamma$ such that
\begin{equation}\label{cyl'} 
\Psi[A]\ =\ \psi(h_{e_1}[A],\ldots ,h_{e_n}[A]) 
\end{equation} 
with a function
function $\psi\ :\ {\rm SU}(2)^n \ \rightarrow \mathbb{C}$. Here, for every edge we choose an orientation 
to define the parallel transport $h_{e_I}[A]$. For each edge $e_J$ of the  type 3, we also choose an arbitrary beginning-end point, and assume that 
$$\psi(h_1,\ldots ,h_J,\ldots) = \psi(h_1,\ldots ,g^{-1}h_Jg,\ldots)\quad \forall g\in {\rm SU}(2).$$ 
\end{df}
Some remarks about this definition are in order: Firstly, we understand \eqref{cyl'} to include the case of $n=0$, in which case $\Psi[A]=$const. Furthermore, 
the functions $\psi$ in (\ref{cyl'}) can be arbitrary, however they must be restricted either to $L^2$ functions when we integrate them (to calculate the scalar product),
or to $C^n$ when we differentiate  them (to define quantum operators). The safe choice is to  assume that  $\psi$ is a polynomial in $\rho_1(h_1)$, \ldots , $\rho(h_n)$, where $\rho_I$ are representations of SU$(2)$ including the trivial one.

To calculate the scalar product between two cylindrical functions $\Psi$ and $\Psi'$
defined by using graphs $\gamma$ and  $\gamma'$, respectively, we find a refined 
graph $\gamma''=\{e''_1,\ldots ,e''_{n''}\}$, such that  both the functions can be written as \footnote{The existence 
is ensured by assuming a suitable differentiability class of the edges. A safe assumption is analyticity of the edges. 
Since analytic diffeomorphisms are not local enough,
in \cite{LOST} we introduced a new category of manifolds we called \emph{semianalytic}. Briefly, semianalyticity means 
differentiability of a given finite order, and suitably defined piecewise analyticity. Then, all the edges and diffeomorphisms are assumed to be semianalytic.}
\begin{equation*}
\begin{split}
\Psi[A] &= \psi(h_{e''_1}[A],\ldots ,h_{e''_{n''}}[A]),\\
\Psi'[A] &= \psi'(h_{e''_1}[A],\ldots ,h_{e''_{n''}}[A]).
\end{split}
\end{equation*}
The scalar product is
\begin{equation}\label{(|)} \scpr{\Psi}{\Psi'}\ =\ \int dg_1\ldots dg_{n"}\overline{\psi(g_1,\ldots ,g_{n"})}{\psi}'(g_1,\ldots ,g_{n"}).\end{equation}
We denote the space of all the cylindrical functions defined as above with a graph $\gamma$ by $\tilde{\cyl}_\gamma$ and, 
respectively, the space of all the cylindrical functions by $\cyl$. The Hilbert space ${\cal H}_{\rm kin}$ is the completion 
\begin{equation*} {\cal H}_{\rm kin} \ =\ \overline{\cyl}\end{equation*}
with respect to the Hilbert norm defined by (\ref{(|)}).

Every cylindrical function $\Psi$ is also a quantum operator
\begin{equation}(\Psi(\widehat{A})\Psi')[A]\ =\ \Psi[A]\Psi'[A].\end{equation}  
A connection operator $\widehat{A}$ by itself is not defined. 

The field $E$ is naturally quantized as 
\begin{equation}\widehat{E}^a_i\Psi[A]\ =\ \frac{\hbar}{i}\{\Psi[A],E^a_i(x)\} \ =\ \frac{8\pi\beta l_\text{P}^2}{i}
\frac{\delta}{\delta A^i_a}\Psi[A]. 
\end{equation}
It turns into well defined operators in ${\cal H}_{\rm kin}$ after smearing along a 2-surface   $S\subset \Sigma$ 
$$\int_S \frac{1}{2}f^i\widehat{E}^a_i\epsilon_{abc}dx^b\wedge dx^c\ \ \ \ \  f:S\rightarrow {\rm su}(2)$$
where $f$ may involve parallel transports \cite{Rovelli:1989za,Thiemann:2000bv}: 
$$ f(x)\ =\ \tilde{f}(x)\left(h_{p{x_0x}}\xi h_{p{xx_0}}\right)^i     $$
where $S\ni x\mapsto p_{xx_0}$ assigns to each point $x\in S$ a path $p_{xx_0}$ connecting a fixed point $x_0$ to $x$,
$\xi\in$su(2), and $\tilde{f}:S\rightarrow \mathbb{R}$. 

There is an orthogonal decomposition of  ${\cal H}_{\rm kin}$ with respect to subspaces labeled by 
the graphs defined above. 
Given a graph $\gamma$, denote by $\widetilde{{\cal H}}_{\gamma}$ the subspace of ${\cal H}_{\rm kin}$ defined by the 
cylindrical functions  (\ref{cyl'}) corresponding to  $\gamma$ . Whenever a graph $\gamma$ can be obtained from a graph $\gamma'$ 
by a sequence of the following steps:
\begin{itemize}
\item cutting an edge $e'_I$ into two: $e'_I=e_{J}\circ e_{K}$
\item adding a new edge: $\gamma=\{e'_1,\ldots ,e'_{n-1},e_n\}$, $\gamma'=\{e'_1,\ldots ,e'_{n-1}\}$
\end{itemize}
then 
\begin{equation*} \tilde{{\cal H}}_{\gamma'}\ <\  \tilde{{\cal H}}_{\gamma}, \end{equation*}
that is, $ \tilde{{\cal H}}_{\gamma'}$  is a proper subset of $ \tilde{{\cal H}}_{\gamma}$.   Hence,
\begin{equation*} {\cal H}_{\rm kin}\ =\ \overline{\bigcup_{\gamma}\tilde{{\cal H}}_\gamma},  \end{equation*}
but this is not an orthogonal decomposition.  

Define $\Psi\in \tilde{{\cal H}}_{\gamma}$ to be a proper element of $\tilde{{\cal H}}_{\gamma}$ 
if this is true that  
\begin{equation*}  \Psi\, \perp\,  \tilde{{\cal H}}_{\gamma'}\ \ \Leftarrow\ \ \tilde{{\cal H}}_{\gamma'}\ <\  \tilde{{\cal H}}_{\gamma}   . \end{equation*}
Given $\gamma$, the proper states form a subspace ${\cal H}_{\gamma}\ \subset\ \tilde{{\cal H}}_\gamma$. The family $({\cal H}_{\gamma})_\gamma$ does provide an orthogonal decomposition
\begin{equation}\label{decomp}
{\cal H} _{\rm kin}\ =\ \overline{\bigoplus_{\gamma} {\cal H}_{\gamma}}.
\end{equation}
This decomposition can also be applied directly to the cylindrical functions 
\begin{align} \cyl_\gamma\ &:=\ \cyl\cap {\cal H}_\gamma \ \subset \cyl,\\
                     \cyl\ &=\ \bigoplus_\gamma \cyl_\gamma \end{align}

\section{Quantum scalar constraint}
The scalar constraint
$$C(N)\ =\ \int d^3x N(x) C(x)$$ 
has not been successfully  quantized in the kinematical Hilbert space ${\cal H}_{\rm kin}$ of the previous section.   We will introduce now a new Hilbert space which admits  quantum operators 
$\widehat{C}(N)$. 
\subsection{A new Hilbert space} 
\label{se_new}
The idea of the new Hilbert space we will introduce now is to average each of the subspaces 
${\cal H}_{\gamma}$ with respect to all the diffeomorphisms  Diff($\Sigma$)$_{v_1,\ldots ,v_m}$  
which act trivially on the set  ${v_1,\ldots ,v_m}$ of the vertices of $\gamma$.\footnote{The general idea of the 
averaging with respect to the diffeomorphisms has been already used in LQG, see \cite{Ashtekar:1995zh,Ashtekar:2004eh}. We apply it now for our purposes.} 

Every $f\in $ Diff($\Sigma$) defines a unitary operator $U_{f}:{\cal H}_{\rm kin}\rightarrow {\cal H}_{\rm kin}$,
\begin{equation*} 
U_f\Psi[A]\ =\ \Psi[f^*A]. 
\end{equation*}
Given a graph $\gamma$  consisting of edges and vertices
\begin{equation*}
E(\gamma):=\{e_1,\ldots ,e_n\},\quad
\ve(\gamma)=\{v_1,\ldots ,v_m\}, 
\end{equation*}
the action of $U_f$ on a cylindrical function (\ref{cyl'}) reads
\begin{equation} 
U_f\Psi[A]\ =\ \psi(h_{f(e_1)}[A],\ldots ,h_{f(e_n)}[A]),
 \end{equation}
 where for the parallel transport along each edge $f(e_I)$ we choose the orientation 
 induced by the map $f$ and by the orientation of $e_I$ chosen in (\ref{cyl'}).
Denote by TDiff($\Sigma$)$_\gamma$ the subset of Diff($\Sigma$) which consists of all the diffeomorphisms acting trivially in  $\tilde{{\cal H}}_{\gamma}$.  On the other hand, for a general $f\in {\rm Diff}(\Sigma)$, we have a unitary isomorphism
\begin{equation*} 
U_f: {\cal H}_\gamma\ \longrightarrow {\cal H}_{f(\gamma)}.
\end{equation*}
The maps 
${\cal H}_{\gamma}\longrightarrow {\cal H}_{\rm kin}$
obtained by the diffeomorphisms Diff($\Sigma$)$_{v_1,\ldots ,v_m}$ are in a one-to-one  correspondence with  the 
elements of the quotient 
\begin{equation} 
{\rm D}_\gamma := {\rm Diff}(\Sigma)_{\ve(\gamma)} / {\rm TDiff}(\Sigma)_\gamma. 
\end{equation}
Still, ${\rm D}_\gamma$ is a noncompact set and we do not know any probability measure on it.   
Therefore it is not surprising, that given $\Psi \in {\cal H}_\gamma$,  the result of the averaging will not, in general, be an
element of  ${\cal H}_{\rm kin}$. However, it will be  well defined as an element of
the space $\cyl^*$ dual to $\cyl$.
Given $\Psi \in {\cal H}_\gamma$, we turn it into  $\gbra{\Psi} \in \cyl^*$,
$$ \gbra{\Psi} \ :\ \Psi'\mapsto \scpr{\Psi}{\Psi'},$$
 and average in $\cyl^*$,
\begin{equation}\label{eta}
 \eta(\Psi)\ =\ \frac{1}{N_\gamma} \sum_{[f]\in D_\gamma}
\gbra{U_f\Psi}, 
\end{equation} 
$N_\gamma$ is a normalization factor that will be determined in a moment. 
\begin{lm}
$\eta(\Psi)$ is a well-defined linear functional 
\begin{equation*}
\eta(\Psi): {\rm Cyl}\rightarrow \mathbb{C}
\end{equation*}
which is invariant under ${\rm Diff}(\Sigma)_{\ve(\gamma)}$. 
\end{lm}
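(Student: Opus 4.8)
The plan is to verify three things in turn: that for each fixed $\Psi'\in\cyl$ the formal series $\sum_{[f]\in D_\gamma}\scpr{U_f\Psi}{\Psi'}$ has only finitely many nonvanishing terms (so $\eta(\Psi)(\Psi')$ is a genuine number), that the map $\Psi'\mapsto\eta(\Psi)(\Psi')$ so obtained is linear, and that it is fixed by the dual action of $\mathrm{Diff}(\Sigma)_{\ve(\gamma)}$. The constant $N_\gamma$ is a fixed positive number and plays no role in any of this; I also use the (implicit) fact that $\gbra{U_f\Psi}$ depends only on the class $[f]\in D_\gamma$, since diffeomorphisms in $\mathrm{TDiff}(\Sigma)_\gamma$ act trivially on $\tilde{\mathcal{H}}_\gamma\ni\Psi$.

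For well-definedness I would use the orthogonal decomposition \eqref{decomp}. A cylindrical function $\Psi'$ is attached to a single graph $\gamma'$, and $\tilde{\mathcal{H}}_{\gamma'}$ is the \emph{finite} orthogonal sum of the proper subspaces $\mathcal{H}_{\gamma'_j}$ over the finitely many graphs $\gamma'_1,\dots,\gamma'_k$ obtained from $\gamma'$ by deleting edges and merging chains of bivalent vertices; write $\Psi'=\sum_{j}\Psi'_j$ accordingly with $\Psi'_j\in\cyl_{\gamma'_j}$. Since $U_f$ restricts to a unitary isomorphism $\mathcal{H}_\gamma\to\mathcal{H}_{f(\gamma)}$, the vector $U_f\Psi$ is a proper element of $\tilde{\mathcal{H}}_{f(\gamma)}$, so by orthogonality $\scpr{U_f\Psi}{\Psi'}=\sum_j\scpr{U_f\Psi}{\Psi'_j}$ can be nonzero only if $f(\gamma)=\gamma'_j$ for some $j$. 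It then remains to bound, for each fixed target graph $G\in\{\gamma'_1,\dots,\gamma'_k\}$, the number of classes $[f]\in D_\gamma$ with $f(\gamma)=G$: having fixed one such $f_0$, all of them form the coset $f_0 S_\gamma$, where $S_\gamma\subset\mathrm{Diff}(\Sigma)_{\ve(\gamma)}$ is the stabilizer of $\gamma$ as a set, and the number of distinct classes equals $\#\bigl(S_\gamma/\mathrm{TDiff}(\Sigma)_\gamma\bigr)$. This is finite, because an element of $S_\gamma$ induces a permutation of the finite edge set of $\gamma$ together with a sign (orientation preserved or reversed) on each edge, and two elements of $S_\gamma$ carrying the same combinatorial data differ by a diffeomorphism mapping every edge onto itself with its orientation — hence acting by mere reparametrization of holonomies, hence lying in $\mathrm{TDiff}(\Sigma)_\gamma$. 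Thus at most $k\cdot\#\bigl(S_\gamma/\mathrm{TDiff}(\Sigma)_\gamma\bigr)$ terms of the series survive, and the sum is finite.

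Linearity is then immediate: on the finite set of $[f]$ that can contribute, each $\Psi'\mapsto\scpr{U_f\Psi}{\Psi'}$ is linear, a finite sum of linear functionals is linear, and passing to $\Psi'_1+\Psi'_2$ only enlarges the contributing index set to a still-finite union. For invariance, let $g\in\mathrm{Diff}(\Sigma)_{\ve(\gamma)}$ act on $\cyl^*$ by $(g\cdot\Phi)(\Psi')=\Phi(U_{g^{-1}}\Psi')$. Using unitarity of $U_g$ and $U_{g}U_f=U_{gf}$,
\begin{equation*}
\bigl(g\cdot\eta(\Psi)\bigr)(\Psi')=\frac{1}{N_\gamma}\sum_{[f]\in D_\gamma}\scpr{U_f\Psi}{U_{g^{-1}}\Psi'}=\frac{1}{N_\gamma}\sum_{[f]\in D_\gamma}\scpr{U_{gf}\Psi}{\Psi'}.
\end{equation*}
Left translation $[f]\mapsto[gf]$ is a bijection of $D_\gamma=\mathrm{Diff}(\Sigma)_{\ve(\gamma)}/\mathrm{TDiff}(\Sigma)_\gamma$, so the right-hand side is a reindexing of $\sum_{[f]}\scpr{U_f\Psi}{\Psi'}$ and equals $\eta(\Psi)(\Psi')$; the reindexing is legitimate termwise because, by the previous paragraph, only finitely many summands are nonzero.

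I expect the only genuine work to be in the well-definedness step, and within it the claim that $S_\gamma$ is finite modulo $\mathrm{TDiff}(\Sigma)_\gamma$ — i.e. that, once the vertices are held fixed, the residual freedom of a diffeomorphism preserving $\gamma$ setwise is nothing more than permuting and reversing edges (finitely many options) up to reparametrization. This is the one place where the precise definitions of ``embedded graph'' and of $\mathrm{TDiff}(\Sigma)_\gamma$ are actually used; the orthogonality bookkeeping and the invariance reindexing are then purely formal.
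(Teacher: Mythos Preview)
Your proof is correct and follows the same line as the paper's, which is extremely terse: the paper simply asserts that each term is independent of the representative (by definition of $\mathrm{TDiff}(\Sigma)_\gamma$), that only finitely many terms survive for each fixed $\Psi'\in\cyl$, and that invariance follows because left translation by $f_0\in\mathrm{Diff}(\Sigma)_{\ve(\gamma)}$ permutes a set of representatives for $D_\gamma$. Your version supplies the details the paper omits for the finiteness step --- the decomposition of $\Psi'$ into proper components, the orthogonality forcing $f(\gamma)=\gamma'_j$, and the bound via $\#\bigl(S_\gamma/\mathrm{TDiff}(\Sigma)_\gamma\bigr)$, which is precisely the group the paper later calls $\mathrm{Sym}_\gamma$ --- but the underlying argument is the same.
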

\begin{proof}
Each term in the sum (\ref{eta}) is independent of the choice of a representative $f\in[f]$
 because the action of ${\rm TDiff}(\Sigma)_\gamma$ on $\mathcal{H}_\gamma$ is trivial. 
 Given $\Psi'\in\cyl$, only a finite set of terms in the sum is not zero. Hence, the sum is finite.  
 The sum is invariant, because if $\{f_i\}$ is a set of representatives for the classes $D_\gamma$, then so is $\{f_0f_i\}, f_0\in {\rm Diff}(\Sigma)_{\ve(\gamma)}$. 
\end{proof}
We define the map
\begin{equation*} {\cal H}_\gamma \ni \Psi\ \mapsto\  \eta(\Psi)\ \in\ {\rm Cyl}^*\end{equation*} 
for every embedded graph $\gamma$, and extend it by linearity to the algebraic orthogonal sum 
\begin{equation}
\eta:\bigoplus_\gamma {\cal H}_\gamma \longrightarrow \cyl^*.
\end{equation}
Notice, that 
$ \cyl \subset \bigoplus_\gamma {\cal H}_\gamma$,
therefore $\cyl$ is in the domain of the averaging map $\eta$.  
\begin{df}
The new Hilbert space  ${\cal H}_{\new}$ is defined as the completion
\begin{equation}{\cal H}_{\new}\ =\ \overline{\eta({\rm Cyl})}  
\end{equation}
under the norm induced by the scalar product
\begin{equation} \label{scpr}
(\eta(\Psi)|\eta(\Psi'))\ :=\ \eta(\Psi)(\Psi'). 
\end{equation} 
\end{df}
One can check \cite{Ashtekar:1995zh} that  \eqref{scpr} has indeed all the properties of a scalar product, 
and hence ${\cal H}_{\new}$ really is a Hilbert space. It has an orthogonal decomposition that is reminiscent of 
\eqref{decomp}:
\begin{lm}
Let ${\rm FS}(\Sigma)$ be the set of finite subsets of $\Sigma$. Then
\begin{align} \label{decomp1}
{\cal H}_{\new}\ &=\ \overline{\bigoplus_{V\in {\rm FS}(\Sigma)}{\cal H}_{V}}\\
\label{decomp2}
{\cal H}_{V}\ &=\ \overline{\bigoplus_{[\gamma]\in[\gamma(V)]}\mathcal{H}_{[\gamma]}}\\
\mathcal{H}_{[\gamma]}\ &= \ \eta({\cal H}_{[\gamma]}),
\end{align}
where  $\gamma(V)$ is the set of graphs $\gamma$ with vertex set $V=\ve(\gamma)$, $[\gamma(V)]$ is the set of the 
Diff$(\Sigma)_V$-equivalence classes $[\gamma]$ of the graphs $\gamma\in \gamma(V)$.  
\end{lm}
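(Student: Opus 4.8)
The plan is to obtain both \eqref{decomp1} and \eqref{decomp2} directly from the orthogonal decomposition \eqref{decomp} of $\mathcal{H}_{\rm kin}$, using the single structural fact that every $f\in\text{Diff}(\Sigma)_{\ve(\gamma)}$ preserves the vertex set $\ve(\gamma)$. For a representative $\gamma$ I would set $\mathcal{H}_{[\gamma]}:=\eta(\mathcal{H}_\gamma)$ and first check this depends only on the class $[\gamma]$: if $\gamma'=f(\gamma)$ with $f\in\text{Diff}(\Sigma)_V$ then $\text{TDiff}(\Sigma)_{\gamma'}=f\,\text{TDiff}(\Sigma)_\gamma\,f^{-1}$, so $[g]\mapsto[fgf^{-1}]$ identifies $D_\gamma$ with $D_{\gamma'}$; since $N_\gamma$ is by construction invariant under $\text{Diff}(\Sigma)_V$, relabelling the sum in \eqref{eta} gives $\eta(U_f\Psi)=\eta(\Psi)$ for all $\Psi\in\mathcal{H}_\gamma$, hence $\eta(\mathcal{H}_{\gamma'})=\eta(\mathcal{H}_\gamma)$ and $\mathcal{H}_{[\gamma]}$ is well defined.

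Next I would establish orthogonality. For $\Psi\in\mathcal{H}_\gamma$ and $\Psi'\in\mathcal{H}_{\gamma'}$ the scalar product \eqref{scpr} reads
\begin{equation*}
(\eta(\Psi)\,|\,\eta(\Psi'))=\eta(\Psi)(\Psi')=\frac{1}{N_\gamma}\sum_{[f]\in D_\gamma}\scpr{U_f\Psi}{\Psi'},
\end{equation*}
and $U_f\Psi\in\mathcal{H}_{f(\gamma)}$ with $\ve(f(\gamma))=\ve(\gamma)$; by the orthogonality in \eqref{decomp} every summand vanishes unless $f(\gamma)=\gamma'$. If $\ve(\gamma)=V\neq V'=\ve(\gamma')$ this cannot happen, giving $\mathcal{H}_V\perp\mathcal{H}_{V'}$; if $V=V'$ but $[\gamma]\neq[\gamma']$ in $[\gamma(V)]$ it again cannot happen, giving $\mathcal{H}_{[\gamma]}\perp\mathcal{H}_{[\gamma']}$. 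For the spanning part I would use that \eqref{decomp} also yields $\cyl=\bigoplus_\gamma\cyl_\gamma$, so $\eta(\cyl)=\sum_\gamma\eta(\cyl_\gamma)\subset\sum_{V}\sum_{[\gamma]\in[\gamma(V)]}\mathcal{H}_{[\gamma]}$, which is dense in $\mathcal{H}_{\new}=\overline{\eta(\cyl)}$ by definition; combined with orthogonality this gives \eqref{decomp1}--\eqref{decomp2}, provided one also knows $\mathcal{H}_{[\gamma]}=\eta(\mathcal{H}_\gamma)=\overline{\eta(\cyl_\gamma)}$, i.e.\ that $\eta$ has closed range on each $\mathcal{H}_\gamma$.

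That last point is where I expect the actual work to be, and where the normalization $N_\gamma$ gets pinned down. Restricting the displayed computation to $\Psi,\Psi'\in\mathcal{H}_\gamma$, only the classes $[f]\in D_\gamma$ with $f(\gamma)=\gamma$ contribute; these form a finite group $G_\gamma$ acting on $\mathcal{H}_\gamma$ through $U$ (here one uses $U_{fg}=U_fU_g$ and that $\text{TDiff}(\Sigma)_\gamma$ is normal in the stabilizer of $\gamma$), and $\sum_{[f]\in G_\gamma}U_f=|G_\gamma|\,P_\gamma$ with $P_\gamma$ the orthogonal projector onto the $G_\gamma$-invariant vectors. Hence $\eta(\Psi)(\Psi')=\frac{|G_\gamma|}{N_\gamma}\scpr{P_\gamma\Psi}{\Psi'}$, so choosing $N_\gamma=|G_\gamma|$ makes $\eta|_{\mathcal{H}_\gamma}$ — under the canonical identification in \eqref{scpr} — the partial isometry $\Psi\mapsto\gbra{P_\gamma\Psi}$: it is bounded, $\|\eta(\Psi)\|_{\new}^2=\|P_\gamma\Psi\|^2$, and its range is closed, which delivers $\mathcal{H}_{[\gamma]}=\eta(\mathcal{H}_\gamma)$.

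The main obstacle is therefore purely the finiteness and group structure of the graph–symmetry group $G_\gamma$ together with the identification of the averaging map $\eta$, restricted to a single $\mathcal{H}_\gamma$, with a multiple of the projector $P_\gamma$; once that is in place (and $N_\gamma$ fixed accordingly), the orthogonality and density claims are routine consequences of \eqref{decomp} and of the fact that diffeomorphisms in $\text{Diff}(\Sigma)_{\ve(\gamma)}$ fix the vertex set.
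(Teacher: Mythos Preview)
Your argument is correct and follows the paper's approach: orthogonality comes from the decomposition \eqref{decomp} of $\mathcal{H}_{\rm kin}$ together with the fact that the averaging \eqref{eta} moves $\mathcal{H}_\gamma$ only through graphs sharing the same vertex set, and spanning comes from the definition $\mathcal{H}_{\new}=\overline{\eta(\cyl)}$. Your third paragraph---identifying $\eta|_{\mathcal{H}_\gamma}$ with a multiple of the projector onto the ${\rm Sym}_\gamma$-invariants and thereby fixing $N_\gamma=|{\rm Sym}_\gamma|$---goes beyond the paper's one-line proof of this lemma and is exactly the content the paper separates out as the subsequent lemma on $S_\gamma$.
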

\begin{proof} Both decompositions follow from definition (no spurious vertices) and the orthogonality of the subspaces $\mathcal{H}_\gamma$, together with \eqref{eta}. 
\end{proof}
To understand the structure of  each of the subspaces $\eta({\cal H}_\gamma)$, 
decompose the space ${\cal H}_{\gamma}$  into the kernel of $\eta$, and the orthogonal  completion 
\begin{equation} 
{\cal H}_{\gamma}\ =\ {\rm Ker}(\eta)\cap {\cal H}_{\gamma}\ \oplus\ S_\gamma.
\end{equation}
The orthogonal completion $S_\gamma$ consists of all the $\Psi$ such that for every $f\in{\rm Diff}(\Sigma)_{\ve(\gamma)}$
\begin{equation}
 f(\gamma)=\gamma\ \Longrightarrow\  U_f\Psi\ =\ \Psi.
\end{equation}
In other words, elements of $S_\gamma$ are invariant with respect to the symmetry group 
\begin{equation} {\rm Sym}_\gamma\ =\ \{f\in  {\rm Diff}(\Sigma)_{\{x_1,\ldots ,x_m\}}\ :\ f(\gamma)=\gamma\}\, /\, {\rm TDiff}(\Sigma)_\gamma
\end{equation} 
of the graph $\gamma$. In fact, it is straightforward to show the following
\begin{lm} 
The map
\begin{equation*} 
\eta: S_\gamma\rightarrow \eta({\cal H}_\gamma)
\end{equation*}
is a unitary  embedding modulo an overall factor  
${|{\rm Sym}_\gamma|}/{N_\gamma}$, where $N_\gamma$ is the free constant in the definition \eqref{eta} of $\eta$. 
\end{lm}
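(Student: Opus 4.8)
The plan is to check the two ingredients of a ``unitary embedding modulo a constant'': that $\eta$ restricts to a bijection of $S_\gamma$ onto $\eta({\cal H}_\gamma)$, and that it rescales the inner product by the factor $|{\rm Sym}_\gamma|/N_\gamma$. Surjectivity is immediate: from the splitting ${\cal H}_\gamma = ({\rm Ker}(\eta)\cap{\cal H}_\gamma)\oplus S_\gamma$ and the fact that $\eta$ annihilates the first summand one gets $\eta({\cal H}_\gamma)=\eta(S_\gamma)$. Injectivity on $S_\gamma$ and the rescaling statement will both follow from a single computation of $(\eta(\Psi)|\eta(\Psi'))$ for $\Psi,\Psi'\in S_\gamma$, which is the heart of the argument; once the resulting norm identity is in hand, continuity of $\eta|_{S_\gamma}$ disposes of any closure issues, since $S_\gamma$ is already a closed subspace.

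For the computation I would unfold the definitions \eqref{scpr} and \eqref{eta}:
\begin{equation*}
(\eta(\Psi)|\eta(\Psi'))\ =\ \eta(\Psi)(\Psi')\ =\ \frac{1}{N_\gamma}\sum_{[f]\in D_\gamma}\scpr{U_f\Psi}{\Psi'}.
\end{equation*}
The key observation is that $\scpr{U_f\Psi}{\Psi'}$ vanishes unless $f(\gamma)=\gamma$: the operator $U_f$ is a unitary isomorphism ${\cal H}_\gamma\to{\cal H}_{f(\gamma)}$, so $U_f\Psi$ is a proper element of ${\cal H}_{f(\gamma)}$ while $\Psi'$ is a proper element of ${\cal H}_\gamma$, and by the orthogonal decomposition \eqref{decomp} these pair to zero whenever $f(\gamma)\ne\gamma$. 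The condition $f(\gamma)=\gamma$ descends to classes $[f]\in D_\gamma$ since ${\rm TDiff}(\Sigma)_\gamma$ in particular preserves $\gamma$, and the set of classes satisfying it is by definition exactly ${\rm Sym}_\gamma$. For such $[f]$ we have $U_f\Psi=\Psi$ because $\Psi\in S_\gamma$, so the sum collapses and
\begin{equation*}
(\eta(\Psi)|\eta(\Psi'))\ =\ \frac{|{\rm Sym}_\gamma|}{N_\gamma}\,\scpr{\Psi}{\Psi'}.
\end{equation*}
Taking $\Psi=\Psi'\ne 0$ yields injectivity, so together with surjectivity $\eta|_{S_\gamma}$ is the claimed unitary embedding up to the overall constant $|{\rm Sym}_\gamma|/N_\gamma$; choosing $N_\gamma=|{\rm Sym}_\gamma|$ makes it a genuine isometry.

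The only step I expect to need genuine care is the bookkeeping of ${\rm Sym}_\gamma$. One has to know it is a finite group, so that the factor is meaningful and the collapsed sum is finite: this holds because the action on $\tilde{{\cal H}}_\gamma$ of a diffeomorphism fixing all vertices and mapping $\gamma$ to itself is determined by the induced permutation of $E(\gamma)$ together with the orientation it assigns to each edge (the choice of base point on a type-3 edge being immaterial by the conjugation invariance built into the cylindrical functions), so ${\rm Sym}_\gamma$ injects into a finite group. One then checks that the number of classes $[f]\in D_\gamma$ with $f(\gamma)=\gamma$ equals $|{\rm Sym}_\gamma|$ exactly, with no overcounting. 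The vanishing of the terms with $f(\gamma)\ne\gamma$ and the surjectivity are then direct consequences of the orthogonal structure \eqref{decomp} and the properness of the states, which are already in place before the lemma.
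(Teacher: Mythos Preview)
Your proof is correct and is precisely the ``straightforward'' computation the paper alludes to but does not spell out: unfold the definition of the inner product, use the orthogonality of the graph sectors in \eqref{decomp} to kill all terms with $f(\gamma)\ne\gamma$, and then invoke the defining ${\rm Sym}_\gamma$-invariance of elements of $S_\gamma$ to collapse the remaining sum. The care you take with the finiteness of ${\rm Sym}_\gamma$ and with surjectivity via the splitting ${\cal H}_\gamma=({\rm Ker}(\eta)\cap{\cal H}_\gamma)\oplus S_\gamma$ fills in exactly the details the paper leaves implicit.
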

In the following, we set 
$$\frac{|{\rm Sym}_\gamma|}{N_\gamma}\ =\ 1.$$

Finally, we point out that ${\cal H}_{\new}$ carries a natural action of Diff$(\Sigma)$, which we will also denote by $U$. It is defined by
\begin{equation}
\label{diff}
U_f \eta(\Psi):=\eta(U_f \Psi), \qquad f\in {\rm Diff}(\Sigma)
\end{equation}
and extension by continuity. A short calculation shows
\begin{lm}
$U_f$ as in \eqref{diff} is unitary and maps $\mathcal{H}_V$ to  $\mathcal{H}_{f(V)}$ in the decomposition \eqref{decomp1}. 
\end{lm}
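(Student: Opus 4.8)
The plan is to deduce all three assertions from the unitarity of $U_f$ on $\mathcal H_{\rm kin}$ and from the behaviour of the averaging map $\eta$ under conjugation of diffeomorphisms. Since a diffeomorphism sends edges to edges and endpoints to endpoints, $\ve(f(\gamma))=f(\ve(\gamma))$, and conjugation by $f$ yields
\begin{gather*}
f\,{\rm Diff}(\Sigma)_{\ve(\gamma)}\,f^{-1}={\rm Diff}(\Sigma)_{\ve(f(\gamma))},\\
f\,{\rm TDiff}(\Sigma)_\gamma\,f^{-1}={\rm TDiff}(\Sigma)_{f(\gamma)},\\
f\,{\rm Sym}_\gamma\,f^{-1}={\rm Sym}_{f(\gamma)},
\end{gather*}
the second and third because $U_{fhf^{-1}}=U_fU_hU_{f^{-1}}$ and $\tilde{\mathcal H}_{f(\gamma)}=U_f\tilde{\mathcal H}_\gamma$, so that $U_{fhf^{-1}}$ acts trivially on $\tilde{\mathcal H}_{f(\gamma)}$ (resp.\ maps $f(\gamma)$ to itself) exactly when $U_h$ does the corresponding thing for $\gamma$. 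Consequently $[h]\mapsto[fhf^{-1}]$ descends to a bijection of coset sets $D_\gamma\to D_{f(\gamma)}$, and $|{\rm Sym}_{f(\gamma)}|=|{\rm Sym}_\gamma|$; since $N_\gamma$ was fixed to $|{\rm Sym}_\gamma|$, it is invariant along diffeomorphism orbits of graphs, so $N_{f(\gamma)}=N_\gamma$.

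The central step is the intertwining identity: for every $\Psi\in\bigoplus_\gamma\mathcal H_\gamma$ and every $\Psi'\in\cyl$,
\[
\eta(U_f\Psi)(\Psi')\ =\ \eta(\Psi)(U_{f^{-1}}\Psi').
\]
By linearity it suffices to verify this for $\Psi\in\mathcal H_\gamma$. Then $U_f\Psi\in\mathcal H_{f(\gamma)}$, so by \eqref{eta} one has $\eta(U_f\Psi)(\Psi')=N_{f(\gamma)}^{-1}\sum_{[g]\in D_{f(\gamma)}}\scpr{U_gU_f\Psi}{\Psi'}$. Substituting $g=fhf^{-1}$, using the bijection $D_\gamma\to D_{f(\gamma)}$ above together with $U_gU_f=U_fU_h$ (as $gf=fh$) and $N_{f(\gamma)}=N_\gamma$, this becomes $N_\gamma^{-1}\sum_{[h]\in D_\gamma}\scpr{U_fU_h\Psi}{\Psi'}$, which by unitarity of $U_f$ equals $N_\gamma^{-1}\sum_{[h]\in D_\gamma}\scpr{U_h\Psi}{U_{f^{-1}}\Psi'}=\eta(\Psi)(U_{f^{-1}}\Psi')$.

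This identity does the remaining work. The domain $\bigoplus_\gamma\mathcal H_\gamma$ and $\cyl$ are $U_f$-invariant, as $U_f$ permutes the summands via $\mathcal H_\gamma\mapsto\mathcal H_{f(\gamma)}$, so $\eta(U_f\Psi)$ is defined; and if $\eta(\Psi_1)=\eta(\Psi_2)$ then the identity gives $\eta(U_f\Psi_1)(\Psi')=\eta(U_f\Psi_2)(\Psi')$ for all $\Psi'\in\cyl$, i.e.\ $\eta(U_f\Psi_1)=\eta(U_f\Psi_2)$, so \eqref{diff} is well defined on $\eta(\cyl)$. Using \eqref{scpr} and $U_f(\cyl)=\cyl$,
\[
(U_f\eta(\Psi)|U_f\eta(\Psi'))=\eta(U_f\Psi)(U_f\Psi')=\eta(\Psi)(U_{f^{-1}}U_f\Psi')=(\eta(\Psi)|\eta(\Psi')),
\]
so $U_f$ is isometric on the dense subspace $\eta(\cyl)$ and extends by continuity to an isometry of $\mathcal H_{\new}$; it is onto, with inverse $U_{f^{-1}}$, since $U_{f^{-1}}U_f\eta(\Psi)=\eta(U_{f^{-1}}U_f\Psi)=\eta(\Psi)$, hence unitary. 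Finally, for $\Psi\in\mathcal H_\gamma$ with $\ve(\gamma)=V$ we have $U_f\eta(\Psi)=\eta(U_f\Psi)\in\eta(\mathcal H_{f(\gamma)})\subset\mathcal H_{f(V)}$ by \eqref{decomp2}, since $\ve(f(\gamma))=f(V)$; taking closed linear spans and using \eqref{decomp2} again gives $U_f(\mathcal H_V)\subset\mathcal H_{f(V)}$, and the same argument for $f^{-1}$ gives the reverse inclusion, hence $U_f(\mathcal H_V)=\mathcal H_{f(V)}$.

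The one place that requires a little care is the conjugation identity $f\,{\rm TDiff}(\Sigma)_\gamma\,f^{-1}={\rm TDiff}(\Sigma)_{f(\gamma)}$, together with the facts that $[h]\mapsto[fhf^{-1}]$ then descends to a well-defined bijection of the coset sets $D_\gamma$ and that $N_\gamma$ is constant on diffeomorphism orbits of graphs; granting these, the remaining statements are formal consequences of the unitarity of $U_f$ on $\mathcal H_{\rm kin}$ and of the definitions.
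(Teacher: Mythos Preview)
Your proof is correct and supplies precisely the ``short calculation'' the paper alludes to but does not spell out; the paper gives no explicit argument beyond that phrase, so there is nothing to compare at the level of strategy. Your key step---the intertwining identity $\eta(U_f\Psi)(\Psi')=\eta(\Psi)(U_{f^{-1}}\Psi')$, derived from the conjugation bijection $D_\gamma\to D_{f(\gamma)}$ and the diffeomorphism-invariance of $N_\gamma=|{\rm Sym}_\gamma|$---is exactly the computation one expects, and the remaining conclusions (well-definedness, isometry, surjectivity via $U_{f^{-1}}$, and the sector-mapping property) follow cleanly from it.
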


\subsection{Lifting operators to $\mathcal{H}_{\new}$}
In the kinematical Hilbert space ${\cal H}$ one often considers quantum operators defined 
on the domain $\cyl$, and such that 
\begin{equation} \label{O}\widehat{\cal O}\ : {\rm Cyl} \rightarrow {\rm Cyl} . \end{equation}
Each of them passes, by duality, to an operator $\widehat{\cal O}^*$ defined in Cyl$^*$. In particular, it is defined on 
$\eta(\cyl)\subset {\cal H}_{\new}$.  However,  while $\widehat{\cal O}^*$
maps  $\eta(\cyl)$ into Cyl$^*$, the image is not  necessarily in ${\cal H}_{\new}$. 
Importantly, sometimes the domain is actually  mapped  back into ${\cal H}_{\new}$. Then  $\widehat{O}^*$ becomes 
an operator in ${\cal H}_{\new}$. 
\begin{lm} 
Suppose, an operator as in \eqref{O} 
is of the form 
\begin{equation*}
\widehat{O}(N)\ =\ \sum_{x\in\Sigma} N(x) \widehat{\cal O}_x,  
\end{equation*} 
where the $\widehat{\cal O}_x$ are operators that have a \emph{local action} for all $x\in\Sigma$, 
\begin{align}  
\label{loc1}   
U_f \widehat{O}_x  &= \widehat{O}_x U_f \quad \text{ for }f\in {\rm Diff}_{\{x\}}\\
\label{loc2}
\widehat{\cal O}_x|_{\tilde{\cal H}_\gamma} &= 0 \quad \text{ for }x \notin \ve(\gamma) .
\end{align}
Then  $\widehat{\cal O}^*$ is an operator on $\mathcal{H}_{\new}$.
\end{lm}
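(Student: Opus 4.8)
The dual operator $\widehat{O}(N)^*$ is defined on $\cyl^*$ by $\big(\widehat{O}(N)^*\xi\big)(\Psi')=\xi\big(\widehat{O}(N)\Psi'\big)$ for $\Psi'\in\cyl$, and what has to be shown is that $\widehat{O}(N)^*$ maps $\eta(\cyl)$ into $\mathcal{H}_{\new}$. By linearity it is enough to take $\xi=\eta(\Psi)$ with $\Psi\in\cyl_\gamma$; set $V:=\ve(\gamma)$. Note first that $\widehat{O}(N)$ is a genuine operator $\cyl\to\cyl$: on $\cyl_{\gamma'}$ assumption \eqref{loc2} annihilates every term $\widehat{\mathcal O}_x$ with $x\notin\ve(\gamma')$, so $\widehat{O}(N)$ reduces there to the \emph{finite} sum $\sum_{v\in\ve(\gamma')}N(v)\,\widehat{\mathcal O}_v$, and by \eqref{O} the result lies in $\cyl$ (in particular each $\widehat{\mathcal O}_v\Psi\in\cyl$, by choosing $N$ adapted to the finite set $V$). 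Hence $\widehat{O}(N)^*$ is defined on all of $\cyl^*$, in particular on $\eta(\cyl)$.

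The plan has two steps. Step (i): show that only the vertices of $\gamma$ contribute, i.e.
\[
\widehat{O}(N)^*\eta(\Psi)\;=\;\sum_{v\in V}N(v)\,\widehat{\mathcal O}_v^*\eta(\Psi),
\]
a finite sum, so that no question of convergence in $\mathcal{H}_{\new}$ arises. Expanding $\eta(\Psi)=\tfrac{1}{N_\gamma}\sum_{[f]\in D_\gamma}\gbra{U_f\Psi}$, every $U_f\Psi$, $f\in\mathrm{Diff}(\Sigma)_V$, is a proper element of $\cyl_{f(\gamma)}$ with $\ve(f(\gamma))=V$; meanwhile a local operator $\widehat{\mathcal O}_x$ acting at a point $x\notin V$ sends $\cyl$ into functions supported on graphs that carry $x$ as a genuine (non-removable) vertex, as follows from \eqref{loc2} applied to the proper decomposition of the argument. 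Orthogonality of the decomposition \eqref{decomp} then forces $\scpr{U_f\Psi}{\widehat{\mathcal O}_x\Psi'}=0$ for all $f$ and all $\Psi'\in\cyl$ whenever $x\notin V$, which is Step (i). Step (ii): for $v\in V$ we have $\mathrm{Diff}(\Sigma)_V\subseteq\mathrm{Diff}_{\{v\}}$, so \eqref{loc1} gives $U_g\widehat{\mathcal O}_v=\widehat{\mathcal O}_v U_g$ for all $g\in\mathrm{Diff}(\Sigma)_V$; dualizing, and using that $\eta(\Psi)$ is $\mathrm{Diff}(\Sigma)_V$-invariant (established above), $\widehat{\mathcal O}_v^*\eta(\Psi)$ is again $\mathrm{Diff}(\Sigma)_V$-invariant. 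Since $\widehat{\mathcal O}_v\Psi\in\cyl$ is supported on a graph $\gamma_v\supseteq\gamma$ with the same distinguished vertices (only finitely many new edges, all attached at $v$), one identifies $\widehat{\mathcal O}_v^*\eta(\Psi)=\eta\big(\widehat{\mathcal O}_v^{\,\dagger}\Psi\big)$, with $\widehat{\mathcal O}_v^{\,\dagger}:\cyl\to\cyl$ the formal adjoint (which exists, and inherits the local action, for operators built from holonomies and the self-adjoint volume operator), the matching of the two group averages being the consistency of $\eta$ on functions of fixed vertex set that already underlies the decomposition \eqref{decomp1}. Summing over $v\in V$ yields $\widehat{O}(N)^*\eta(\Psi)=\eta\big(\sum_{v\in V}N(v)\,\widehat{\mathcal O}_v^{\,\dagger}\Psi\big)\in\mathcal{H}_{\new}$.

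The main obstacle is Step (i): ruling out that $\widehat{O}(N)^*$ ``leaks'' out of $\mathcal{H}_{\new}$. This is exactly where the two locality hypotheses are used essentially. Without \eqref{loc2} one would face an uncontrolled sum over all of $\Sigma$, which cannot converge in the Hilbert norm \eqref{scpr}; with \eqref{loc2}, not only does the sum collapse to the finite vertex set $V$, but the off-vertex terms vanish, because a contribution of $\widehat{\mathcal O}_x$ with $x\notin V$ necessarily lives on graphs retaining $x$ as a genuine vertex and is therefore orthogonal to $U_f\Psi$, whose vertex set is precisely $V$. Condition \eqref{loc1} then ensures that the surviving contributions at $v\in V$ commute past the partial averaging, so that with the normalization $|\mathrm{Sym}_\gamma|/N_\gamma=1$ no diffeomorphism classes are mis-counted and the output is genuinely an $\eta$-image. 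The remaining ingredients — reduction to a single graph, the finiteness observation, and the dualizations — are routine.
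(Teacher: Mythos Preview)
Your strategy is the same as the paper's: use \eqref{loc2} to collapse the sum to finitely many vertices, then use \eqref{loc1} to commute the surviving local operators past the diffeomorphism average \eqref{eta}. The paper's proof is a one-sentence sketch (``pull the action of $\widehat{O}$ through the average''), and you are essentially trying to unpack that sentence.

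Two of your added details, however, are not justified by the stated hypotheses. In Step~(i) you claim that for $x\notin V$ the operator $\widehat{\mathcal O}_x$ sends $\cyl$ into functions whose graph carries $x$ as a genuine vertex, and you say this ``follows from \eqref{loc2} applied to the proper decomposition of the argument''. It does not: \eqref{loc2} constrains the \emph{kernel} of $\widehat{\mathcal O}_x$ (it vanishes on graphs without $x$ as a vertex), not its \emph{range}. To control where the image lives you need either an extra hypothesis or an argument using \eqref{loc1} together with diffeomorphisms that fix $x$ but move the putative image graph --- neither of which you supply. In Step~(ii) you introduce a formal adjoint $\widehat{\mathcal O}_v^{\,\dagger}:\cyl\to\cyl$ and identify $\widehat{\mathcal O}_v^*\eta(\Psi)=\eta(\widehat{\mathcal O}_v^{\,\dagger}\Psi)$; but the existence of such an adjoint with local action is not among the hypotheses of the lemma, and your parenthetical appeal to ``operators built from holonomies and the self-adjoint volume operator'' is an appeal to the examples, not to the lemma as stated. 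The matching of the two group averages (over $D_\gamma$ versus $D_{\gamma_v}$) is also waved away rather than checked.

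That said, the paper's own proof is no more detailed than your sketch, and the examples it cares about (volume, Gauss, curvature) do satisfy the extra properties you tacitly assume. So your write-up is in the right spirit; just be aware that the claim in Step~(i) about the range is not a consequence of \eqref{loc2} alone, and that invoking a formal adjoint imports structure beyond what the lemma asserts.
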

\begin{proof}
Using the conditions of locality (\ref{loc1},\ref{loc2}), one can pull the action of $\widehat{O}$ through the average \eqref{eta}, resulting in an element of Cyl$^*$ of the same form.
\end{proof}
There are several important operators which have this property: the quantum volume element smeared against
an arbitrary function $N\in C(\Sigma)$,
\begin{equation*}
 \widehat{V}(N)\ =\ \int d^3x N(x)  \sqrt{|{\rm det}\widehat{E}|}\ =\ \sum_{x\in\Sigma}N(x)\widehat{V}_x \end{equation*}
the Gauss constraint operator (for $\Lambda\in C(\Sigma,{\rm su(2)})$)
\begin{equation*}
 \int d^3 x \Lambda^i D_a\widehat{E}^a_i\ =\ \sum_{x\in\Sigma}\Lambda^i(x)\widehat{\cal G}_{ix} ,\end{equation*}
and also the integral of the Ricci scalar operator
\begin{equation*}
\widehat{R}(N)\ =\ \int d^3x N(x)   \widehat{{\sqrt{|{\rm det}{E}|}R}}(x)\ =\ \sum_{x\in\Sigma}N(x)\widehat{VR}_x .
\end{equation*}
which has recently been introduced \cite{Alesci:2014aza}. Our quantum scalar constraint operator will take a similar form in ${\cal H}_{\new}$, although
it will not be well defined in Cyl itself. It will be defined directly in ${\cal H}_{\new}$. 

\subsection{Scalar constraint operator for Euclidean gravity} 
\label{se_euc}
The Euclidean scalar constraint in the absence of matter can be obtained from \eqref{vacuum_scalar} by setting the metric signature $\sigma$ to 1. For the choice of $\beta=1$, the expression simplifies because the second term drops out. The remaining term, in Thiemann form, is proportional to 

\begin{equation}\label{euclidean_scalar}
C_{\rm Euc}(N) = \frac{-2}{(8\pi G)^2\beta^{\frac{3}{2}}} \int d^3 x \epsilon^{abc}{\rm Tr}F_{ab}(x)\{A_b(x), V(N)\} \end{equation} 
where 
$$V(N)= 
\int d^3x N(x)\sqrt{{\rm det}\ E(x)} $$    
and $N: \Sigma \rightarrow \mathbb{R}$ is an arbitrary lapse function. 

In the Lorentzian case, the second term in \eqref{vacuum_scalar} cannot be made to vanish for real Ashtekar variables, 
thus $C_{\rm Euc}(N)$ is only one part of $C(N)$. But even in this case, 
$C_{\rm Euc}(N)$ plays a vital part in the quantization of the whole constraint \cite{gr-qc/9606089}. 

To quantize the Euclidean scalar constraint, we express  $F$ by the parallel transports along suitable 
loops  $\alpha_\sigma^{\epsilon}$ and we express  $A$ in terms of parallel transport along suitable curves 
$s_\sigma^{\epsilon}$, 
\begin{equation} 
\begin{split}
C^{\epsilon}_{\rm Euc}(N)(A,E)\ =\  \sum_\sigma  & B_\sigma 
\left( \rho_\sigma(h_{\alpha_\sigma^{\epsilon}}) - \rho_\sigma(h_{(\alpha_\sigma^{\epsilon})^{-1}})\right)\cdot\\ 
& \cdot {\rm Tr}\left(\rho_\sigma(h_{(s_\sigma^{\epsilon})^{-1}}) \{\rho_\sigma(h_{s_\sigma^{\epsilon}}), 
V(N)\}\right)
 \end{split}
\end{equation}
The $B_\sigma$ are $\epsilon$-independent constants and the $\rho_\sigma$  representations of SU$(2)$. The loops $\alpha$ and  curves $s$  approach points in the limit $\epsilon\rightarrow 0$. Moreover,  $B_\sigma$, $\rho_\sigma$, $\alpha_\sigma$ and $s_\sigma$ are chosen such that, 
the entire expression converges to \eqref{euclidean_scalar}, 
\begin{equation} 
\lim_{\epsilon\rightarrow 0}C^{\epsilon}_{\rm Euc}(N)(A,E) = C_{\rm Euc}(N)(A,E)
\end{equation}
for every smooth $(A,E)$.   
   
For every fixed value of $\epsilon$, the operator
\begin{equation}
 \begin{split}
 \widehat{C}^{\epsilon}_{\rm Euc}(N) &=
 \frac{1}{i\hbar}\sum_\sigma   B_\sigma 
 {\rm Tr}
 \Big( \left(\widehat{\rho_\sigma(h_{\alpha_\sigma^\epsilon})}-\widehat{\rho_\sigma(h_{(\alpha_\sigma^\epsilon)^{-1}})}\right)\cdot\\
 &\qquad\qquad\cdot \widehat{\rho_\sigma(h_{(s_\sigma^\epsilon})^{-1})} [\widehat{\rho_\sigma(h_{s_\sigma^\epsilon})}, \widehat{V}(N)]\Big)  ,
 \end{split}
\end{equation}
 is well defined in the kinematic Hilbert space ${\cal H}_{\rm kin}$ in the domain $\cyl$. 
 However, the limit $\epsilon\rightarrow 0$
 does not exist.  Also, before taking the limit, for a constant  $\epsilon$, the operator is  not diffeomorphism covariant. 
 The finite loops break the covariance.  Remarkably, there is a way out.  First, we improve the regularization. 
For that we apply the decomposition  
(\ref{decomp}), and adapt the regulated expression to each subspace ${\cal H}_\gamma$ independently. 
We will do it below in such a way,  that for $\gbra{\Psi_1} \in \eta(\cyl)\subset {\cal H}_{\new}$  
and $\Psi_2\ \in \cyl_\gamma\subset  {\cal H}_{\gamma}$, the number  
\begin{equation}  \label{<epsilon>} 
\gbra{\Psi_1} \left(\widehat{C}^{\epsilon}_{\rm Euc}(N) \Psi_2 \right)
\end{equation}                                     
will be $\epsilon$-independent -- either because it vanishes, or due to the symmetries of $\gbra{\Psi_1}\ \in\ {\cal H}_{\new}$. In this way, we will define the limit 
\begin{equation}
\widehat{C}^*_{\rm Euc}(N):= \lim_{\epsilon\rightarrow 0} \left(\widehat{C}^{\epsilon}_{\rm Euc}(N)\right)^*
\end{equation}
as an operator on $\cyl^*$, by setting
\begin{equation}  \label{lim}
\left( \widehat{C}^*_{\rm Euc}(N)   \gbra{\Psi_1}  \right)(\Psi_2)
:= \lim_{\epsilon\rightarrow 0} 
\gbra{\Psi_1} \left(\widehat{C}^{\epsilon}_{\rm Euc}(N) \Psi_2 \right).
\end{equation}  
Note that this involves some abuse of notation, as $\widehat{C}^*_{\rm Euc}$ is not 
the dual of any operator defined in $\cyl$. 

We explain now, in what way we achieve the $\epsilon$-independence of (\ref{<epsilon>}).
We are making the same assumptions about the loop-path assignment  
\begin{equation}\label{la}
(\gamma, v) \mapsto \{\alpha^\epsilon_\sigma, s^\epsilon_\sigma\, |\, \sigma=1,2,\ldots \}
\end{equation}
as in Sec.\ VI.C of \cite{Ashtekar:2004eh}. 
For each $\sigma$, the pair $s_\sigma$ and $\alpha_\sigma$ is based at a point $v\in\Sigma$. 
If $v$  is not a vertex of $\gamma$, then the corresponding   term of the operator vanishes.
Here, $v\in \ve(\gamma)$, and $\sigma$ labels the pairs of 
loops and segments based at $v$. 
As a result, the action of the regulator-dependent operator $\widehat{C}^{\epsilon}_{\rm Euc}(N)$ defined on ${\cal H}_\gamma$ takes the following form,
\begin{equation}\label{opform}
\widehat{C}^{\epsilon}_{\rm Euc}(N) =\ \sum_{v\in \ve(\gamma)} N(v)\sum_{\sigma}\widehat{C}^\epsilon_{\gamma v \sigma},
\end{equation}
where 
\begin{equation}\label{epsilon}  
\widehat{C}^\epsilon_{\gamma v\sigma}: {\cyl}_{\gamma}  \rightarrow \cyl_{\gamma\cup \{\alpha^\epsilon_\sigma\}}\ \subset\ {\cal H}_{\gamma\cup \{\alpha^\epsilon_\sigma\}}.
\end{equation} 
In other words, the operator adds the loops $\alpha^\epsilon_\sigma$ to $\gamma$, while the paths $s^\epsilon_\sigma$, 
possibly new elements of the  graph $\gamma$,  do not change the graph, regarded as a subset in $\Sigma$. 
Every loop  $\alpha^\epsilon_\sigma$  appearing in (\ref{epsilon}) begins and ends at a vertex of $\gamma$, does not
intersect $\gamma$ in any other point, and does not have self intersections. Hence   $\alpha^\epsilon_\sigma$
becomes an edge (of type 2) of the new graph $\gamma\cup \{\alpha^\epsilon_\sigma\}$.  
One of the two key properties we ask of the loop 
assignment \eqref{la} is that for every $\epsilon_1$ and $\epsilon_2$ there is $f\in {\rm Diff}_{\ve(\gamma)}$ such that
\begin{equation}\label{cov}
\widehat{C}^{\epsilon_2}_{\gamma v \sigma}\ =\ U_f \widehat{C}^{\epsilon_1}_{\gamma v \sigma} . \end{equation}
This is the property that ensures the independence of (\ref{<epsilon>}) of $\epsilon$: 
\begin{equation}  \label{<epsilon12>} \begin{split}
\bra{\Psi_1} \left(\widehat{C}^{\epsilon_2}_{\gamma v\sigma}  \Psi_2 \right) &= \bra{\Psi_1} \left(U_f\widehat{C}^{\epsilon_1}_{\gamma v\sigma}  \Psi_2 \right)\\
&= \bra{\Psi_1} \left(\widehat{C}^{\epsilon_1}_{\gamma v_I\sigma}  \Psi_2 \right).
\end{split}
\end{equation}   
Consequently, the limit \eqref{lim} on $\eta(\cal{H}_\gamma)$ and, by linearity over the $\gamma$-sectors, on $\eta(\cyl)\ \subset\ {\cal H}_{\new}$, can be taken. 
However, the result  is not necessarily an element of ${\cal H}_{\new}$. 
For example, in general it may not be Diff$(\Sigma)_{\ve(\gamma)}$-invariant.  To ensure the invariance, we need to  coordinate the assignments \eqref{la} for 
graphs that are equivalent under Diff$(\Sigma)_{v_1,\ldots ,v_m}$. 
Also, we want the resulting operator valued distribution to be invariant with respect to all the Diff$(\Sigma)$. 
Hence, as our second key property, we ask the following: 
Given a graph $\gamma$, and 
 $f_1\in{\rm Diff}(\Sigma)$ then there exists 
 $f_2\in {\rm Diff}(\Sigma)$
such that
\begin{equation} \label{CffC}
 \widehat{C}^{\epsilon}_{f_1(\gamma)f_1(v) \sigma}U_{f_1}\ =\ U_{f_2} \widehat{C}^{\epsilon}_{\gamma v\sigma}.
\end{equation} 
A simple calculation then completes the proof the following 
\begin{prop}
Let $\widehat{C}^*_{\rm Euc}(N)$ be an operator on Cyl$^*$ obtained as a limit 
\begin{equation*}
\widehat{C}^*_{\rm Euc}(N):= \lim_{\epsilon\rightarrow 0} \left(\widehat{C}^{\epsilon}_{\rm Euc}(N)\right)^*, 
\end{equation*}
where $\widehat{C}_{\rm Euc}(N)$ is of the form \eqref{opform} satisfying 
\begin{itemize}    
\item[(a)] covariance  \eqref{cov} under changes of $\epsilon$,
\item[(b)] covariance \eqref{CffC} under Diff$_V$.
\end{itemize}
Then $\widehat{C}^*_{\rm Euc}(N)$ preserves $\mathcal{H}_{\new}$, i.e., 
\begin{equation}\label{inHnew}
\widehat{C}^*_{\rm Euc}(N) \gbra{\Psi}\ \in {\cal H}_{\new},
\end{equation}
and it is diffeomorphism covariant, i.e., 
\begin{equation}
\widehat{C}^*_{\rm Euc}(N)U_f\ =\ U_f\widehat{C}^*_{\rm Euc}(f^{-1*}N). 
\end{equation}
\end{prop}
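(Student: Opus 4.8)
The plan is to establish both claims by working sector by sector in the graph decomposition \eqref{decomp} and transporting everything through the averaging map \eqref{eta}. Since it suffices to treat $\gbra{\Psi}$ in the dense domain $\eta(\cyl)$, and $\widehat{C}^{*}_{\rm Euc}(N)$ is linear over the $\gamma$-sectors, it is enough to prove \eqref{inHnew} for $\gbra{\Psi}=\eta(\Phi)$ with $\Phi\in\cyl_{\gamma}$; I set $V=\ve(\gamma)$ and recall that $\eta(\Phi)=\frac{1}{N_\gamma}\sum_{[g]\in D_\gamma}\gbra{U_g\Phi}$ is a locally finite sum of vectors invariant under ${\rm Diff}(\Sigma)_V$. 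Since $\mathcal{H}_{\new}=\overline{\eta(\cyl)}$, what must be shown is that $\widehat{C}^{*}_{\rm Euc}(N)\eta(\Phi)=\eta(\Theta)$ for some cylindrical $\Theta$. To find $\Theta$ I would evaluate the defining limit \eqref{lim} on an arbitrary test vector $\Psi_2\in\cyl_{\gamma'}$: inserting the form \eqref{opform} of $\widehat{C}^{\epsilon}_{\rm Euc}(N)$ adapted to $\gamma'$ and expanding $\eta(\Phi)$ over $D_\gamma$, one sees that only finitely many of the resulting scalar products survive, because by \eqref{epsilon} each $\widehat{C}^{\epsilon}_{\gamma' v'\sigma}\Psi_2$ is a proper element of $\cyl_{\gamma'\cup\{\alpha^\epsilon_\sigma\}}$ and hence orthogonal to every cylindrical function that does not contain the loop $\alpha^\epsilon_\sigma$.

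The two hypotheses then enter. First, property (a), \eqref{cov}: for any $\epsilon_1,\epsilon_2$ there is $f\in{\rm Diff}(\Sigma)_{\ve(\gamma')}$ with $\widehat{C}^{\epsilon_2}_{\gamma' v'\sigma}=U_f\widehat{C}^{\epsilon_1}_{\gamma' v'\sigma}$; on the terms that have not already vanished one has $V\subseteq\ve(\gamma')$, so such $f$ leaves $\eta(\Phi)$ invariant, and the pairing is seen to be $\epsilon$-independent exactly by the computation displayed just after \eqref{cov}. Hence the limit in \eqref{lim} equals the value at any fixed $\epsilon$. Second, property (b), \eqref{CffC}: it lets one rewrite, term by term, $\widehat{C}^{\epsilon}_{g(\gamma) v\sigma}U_g=U_{g_{v\sigma}}\widehat{C}^{\epsilon}_{\gamma v\sigma}$ with $g_{v\sigma}\in{\rm Diff}(\Sigma)_V$, and then the locally finite sum over $[g]\in D_\gamma$ re-assembles into the $\eta$-average of a single cylindrical function $\Theta$ whose underlying graph has vertex set exactly $V$ --- the point being that each $\alpha^\epsilon_\sigma$ is a type-$2$ edge based at a vertex in $V$, so no new vertices appear and the relevant averaging group is again ${\rm Diff}(\Sigma)_V$. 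With the normalization $|{\rm Sym}_\gamma|/N_\gamma=1$ already fixed, this re-collection is an honest reshuffling of orbits, giving $\widehat{C}^{*}_{\rm Euc}(N)\eta(\Phi)=\eta(\Theta)\in\mathcal{H}_{\new}$; extension by linearity over $\gamma$ then yields \eqref{inHnew}.

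For diffeomorphism covariance I would run the same computation equivariantly. For $f\in{\rm Diff}(\Sigma)$ one has $U_f\eta(\Phi)=\eta(U_f\Phi)$ by \eqref{diff}, with $U_f\Phi$ cylindrical on $f(\gamma)$ and vertex set $f(V)$; applying $\widehat{C}^{*}_{\rm Euc}(N)$, the whole relabeling $\gamma\mapsto f(\gamma)$, $V\mapsto f(V)$ is implemented by $U_f$, and \eqref{CffC} intertwines the $\epsilon$-pieces $\widehat{C}^{\epsilon}_{f(\gamma)f(v)\sigma}U_f=U_{f'}\widehat{C}^{\epsilon}_{\gamma v\sigma}$, so $U_f$ may be pulled out to the left of the whole expression. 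The only asymmetry that remains is that the lapse, which in the averaged constraint is sampled at the vertices, is now sampled at $f(V)$, i.e.\ replaced by $f^{-1*}N$ (the precise form of the pullback being dictated by the $U_f$-action \eqref{diff}). This is the asserted identity $\widehat{C}^{*}_{\rm Euc}(N)U_f=U_f\widehat{C}^{*}_{\rm Euc}(f^{-1*}N)$.

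The step I expect to be the real obstacle is the identification of the limiting functional with $\eta(\Theta)$ --- that is, showing that the image lands back in $\eta(\cyl)$ rather than merely in $\cyl^{*}$. This requires two features to cooperate: the loops produced by the constraint must attach at already-existing vertices, so that the averaging group ${\rm Diff}(\Sigma)_V$ is unchanged and, via \eqref{cov}, the $\epsilon\to 0$ limit is trivial; and the loop assignment \eqref{la} must be equivariant enough, via \eqref{CffC}, that the new edges are themselves swept over as $g$ ranges over $D_\gamma$, so that $\frac{1}{N_\gamma}\sum_{[g]}\gbra{\cdots}$ really does re-assemble into an $\eta$-average. Matching the normalization constants $N_\gamma$ and $|{\rm Sym}_\gamma|$ along the way, and checking that no vertices are spuriously created or lost, is where the bookkeeping is delicate; by contrast, once \eqref{cov} and \eqref{CffC} are in hand, the $\epsilon$-independence and the diffeomorphism covariance are essentially mechanical.
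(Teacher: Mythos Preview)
Your proposal is correct and follows essentially the same approach as the paper. The paper's own argument is extremely terse --- it establishes the $\epsilon$-independence via the computation \eqref{<epsilon12>} using property (a), states that property (b) is what is needed for the result to be Diff$(\Sigma)_{\ve(\gamma)}$-invariant and Diff$(\Sigma)$-covariant, and then simply says ``A simple calculation then completes the proof''; you have supplied precisely that calculation, including the key observation that the nonvanishing terms force $V\subseteq\ve(\gamma')$ so that the $f$ in \eqref{cov} lies in ${\rm Diff}(\Sigma)_V$, and the re-assembly of the $D_\gamma$-sum into an $\eta$-average of a cylindrical function with unchanged vertex set.
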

Above,  the assumption made in \cite{Ashtekar:2004eh} and adopted here, that in (\ref{la}) 
the assigned loops $\alpha_\sigma$ do not overlap $\gamma$ is relevant for (\ref{inHnew}) . 
Otherwise, the operator could produce non-normalizable elements of $\cyl^*$.
We proceed to discuss some further properties of $\widehat{C}^*_{\rm Euc}(N)$ under the assumptions of 
the previous proposition. Firstly, note that we can write 
\begin{equation} 
\widehat{C}^*_{\rm Euc}(N)\ =\ \sum_{x\in \Sigma}N(x)\widehat{C}^*_{{\rm Euc},x}, 
\end{equation}
where $\widehat{C}^*_{{\rm Euc},x}$ has the following properties: It preserves the spaces\footnote{In this paper we are using
a vertex set preserving regularization of the quantum scalar constraint operator introduced in \cite{Ashtekar:2004eh}. In the case of a regularization
which adds  extra vertices, we would have to modify our definition of the space ${\cal H}_V$ suitably. For example, if the 
quantum scalar constraint operator is defined such that it  adds planar vertices \cite{Rovelli:1993}, the new ${\cal H}_V$ would involve  the graphs such that 
the non-planar vertex set is $V$} 
$\eta(\cyl)_V\ :=\ \eta(\cyl) \cap {\cal H}_{V}$ for  $V\in {\rm FS}(\Sigma)$, 
$$\widehat{C}^*_{{\rm Euc},x}  \eta(\cyl)_{V}\subseteq \ \eta(\cyl)_{V} . 
$$
This makes clear that the operator $\widehat{C}^*_{\rm Euc}(N)$ preserves the decomposition \eqref{decomp1} of $\mathcal{H}_{\new}$ into sectors labeled by finite subsets $V$ of $\Sigma$. 
Moreover,  
$$\widehat{C}^*_{{\rm Euc},x}\rvert_{\eta(\cyl)_{V}}\ =\ 0, \ \ \ {\rm unless}\ \ \ x\in V.  
$$
Also, $\widehat{C}^*_{{\rm Euc},x}$ is covariant, 
$$    U_f\widehat{C}^*_{{\rm Euc},x}U_{f^{-1}}\ =\ \widehat{C}^*_{{\rm Euc},{f(x)}}, $$
for every $f\in{\rm Diff}(\Sigma)$. 

Finally,  $\widehat{C}^*_{\rm Euc}(N)$ does not preserve the decomposition \eqref{decomp2}. Rather, 
by the duality to (\ref{epsilon}),
the operator annihilates  the loops created by each $\widehat{C}^\epsilon_{\gamma v \sigma}$. 

The operator $\widehat{C}^*_{\rm Euc}(N)$ is not symmetric. 
But the Hermitian adjoint 
$$ \left(\widehat{C}^*_{\rm Euc}(N)\right)^\dagger $$
is well defined.
 A typical proposal for a symmetric quantum scalar constraint operator is 
\begin{equation}
 \label{sa1}\widehat{C}_{\rm Euc}(N)\ :=\ \frac{1}{2}\left(\widehat{C}^*_{\rm Euc}(N) + \left(\widehat{C}^*_{\rm Euc}(N)\right)^\dagger\right).
\end{equation}
The (essential) self-adjoitness  is an open issue.
\subsection{The quantum  Lorentzian scalar constraint of matter-free gravity}
\label{se_lor}

To define the quantum scalar constraint operator of the Lorentzian gravity and with a general value of
the Barbero-Immirzi parameter $\beta$, we go back to the classical theory. The gravitational part of the 
scalar constraint is  
$$C(N)\ =\  \sqrt{\beta}C_{\rm Euc}(N)\ -\ 2\frac{(1+\beta^2)}{(8\pi G)^4 \beta^6}T(N) $$
where $T$ is written in a way compatible with the LQG as follows \cite{gr-qc/9606089}       
\begin{equation}
\begin{split}  
T(N)  = {-2}&\int d^3 x \epsilon^{abc} {\rm Tr}
\big(\{A_a(x),\{C_{\rm Euc}(1),V(1)\}\} \cdot\\
&\cdot\{A_a(x),\{C_{\rm Euc}(1),V(1)\}\} \{A_c(x),V(N)\} \big)  
\end{split}
\end{equation}
As before, for every subspace ${\cal H}_\gamma$ in the decomposition (\ref{decomp}) 
we use  the family of paths $s^\epsilon_\sigma$ introduced above,  and a regulated classical expression    
\begin{equation}
\begin{split}
T^\epsilon(N) &= \sum_{\sigma,\sigma',\sigma"} e^{\sigma \sigma' \sigma"}
 {\rm Tr}\Big( H^{-1}_\sigma\{ H_\sigma , K \}   \cdot\\
 &\quad\qquad \cdot H^{-1}_{\sigma'}\{ H_{\sigma'} , K \} 
H_{\sigma''}\{ H^{-1}_{\sigma''} , V(N) \}  \Big)
\end{split}
\end{equation}
with
\begin{equation*}
K\ :=\ \{C_{\rm Euc}(1),V(1)\}, \quad H_{\sigma}:=\rho(h_{s^\epsilon_\sigma})
\end{equation*}   
such that as $\epsilon\rightarrow 0$, the paths are shrunk, the constants $e^{\sigma\sigma'\sigma"}$ are independent of $\epsilon$,   
and\footnote{The path assignment $\sigma\mapsto s^\epsilon_\sigma$ obtained by ignoring the loops $\alpha$ in the assignments $\sigma\mapsto \alpha^\epsilon_\sigma,s^\epsilon_\sigma$ used previously
may assign the same segment to two different $\sigma\not=\sigma'$. That may be compensated
by choosing suitable values for the constants $e^{\sigma\sigma'\sigma"}$.} 
\begin{equation}\label{T->T} 
\lim_{\epsilon\rightarrow 0} T^\epsilon(N)[A,E] = T(N)[A,E] .
\end{equation} 
We introduce that regulation for every graph $\gamma$.  Next, in the kinematical ${\cal H}_{\rm kin}$
we define a quantum operator\footnote{Notice, that in the classical regulated expression $T^\epsilon(N)$ we have $C_{\rm Euc}$ whereas in the quantum regulated expression
we use $\widehat{C}^\epsilon_{\rm Euc}$.  If in the classical expression we replaced      $C_{\rm Euc}$ by $C^\epsilon_{\rm Euc}$, then
(\ref{T->T}) would not be true. On the other hand, we can not use $\widehat{C}^{\rm Euc}$ in the quantum $\widehat{T}^\epsilon(N)$, 
because the expression would not make sense. This is a drawback of the regularization procedure of the Lorentzian part of the scalar constraint. }
$ \widehat{T}^\epsilon(N)\ :\ \cyl \ \rightarrow \ \cyl^*$ via
\begin{equation}
\begin{split}
\widehat{T}^\epsilon(N) &= \frac{1}{(i\hbar)^3}  \sum_{s\in{\cal S}} e^{\sigma \sigma' \sigma"}
 {\rm Tr}\Big( {H}^{-1}_\sigma [ {H}_\sigma , \widehat{K}^\epsilon ]  \cdot\\
 &\qquad\quad \cdot {H}^{-1}_{\sigma'} [{H}_{\sigma'} , \widehat{K} ]
{H}_{\sigma''} [{H}^{-1}_{\sigma''} , \widehat{V}(N)]  \Big)
\end{split}
\end{equation}
with 
\begin{equation*}
\widehat{K}^\epsilon\ :=\   \frac{1}{i\hbar}[\widehat{C}^\epsilon_{\rm Euc}(1),\widehat{V}(1)],
 \end{equation*} 
As in the case of the Euclidean quantum  gravitational constraint,  
$$ \eta({\Psi})\left(\widehat{T}^{\epsilon_1}(N)\Psi' \right)\ =\  \eta({\Psi})\left(\widehat{T}^{\epsilon_2}(N)\Psi' \right) $$  
for $\Psi,\Psi'\in\cyl$, hence the limit is well defined as an operator
$$ \widehat{T}^*(N): \eta(\cyl)\ \rightarrow\ \cyl^*.$$
If the constants $e^{\sigma\sigma'\sigma"}$ are assigned to each graph in a  Diff$(\Sigma)_{\ve(\gamma)}$  invariant way, then analogously to the Euclidean case,
$$ \widehat{T}^*(N) \eta(\cyl)\ \subseteq\ \eta(\cyl), $$
hence  $\widehat{T}^*(N)$ becomes an operator in ${\cal H}_{\new}$ with domain  
$\eta({\rm Cyl})$. 
The operator has a similar structure as $\widehat{C}^*_{\rm Euc}(N)$:
$$\widehat{T}^*(N)\ =\ \sum_{x\in\Sigma}N(x)\widehat{T}^*_x, $$
where, for any $V\in {\rm FS}(\Sigma)$,
$$\widehat{T}^*_x\ :\   {\cal H}_{V}\ \rightarrow \ {\cal H}_{V},  $$
and 
$$\widehat{T}^*_x|_{{\cal H}_{V}}\ =\ 0, \ \ \ {\rm unless}\ \ \ x\in V.  $$
If the constants $e^{\sigma\sigma'\sigma"}$ are assigned to each graph in a Diff$(\Sigma)$-invariant way, then
$$ U_{f}\widehat{T}^*_xU^{-1}_f\ =\ \widehat{T}^*_{f(x)} . $$   
that is the distribution $x\mapsto \widehat{T}^*_x$ is Diff$(\Sigma)$-invariant.

If the operator $\left(\widehat{C}^*_{\rm Euc}\right)^\dagger$ exists, then so does $(\widehat{T}^*(N))^\dagger$. In that case  we can define a symmetric 
operator
\begin{equation}
\label{sa2}
\widehat{T}(N)=\frac{1}{2}\left(  \widehat{T}^*(N) + (\widehat{T}^*(N))^\dagger \right). 
\end{equation}
The final result is a  quantum gravitational scalar constraint operator
\begin{align}  
\widehat{C}(N)\ =\  \sqrt{\beta}\widehat{C}_{\rm Euc}(N)\ -\ 2\frac{(1+\beta^2)}{(8\pi G)^4 \beta^6}\widehat{T}(N)  
 \end{align}   
defined in  ${\cal H}_{\new}$  in the domain $\eta({\rm Cyl})$. As a consequence of the properties 
of $\widehat{C}_{\rm Euc}(N)$ and $T(N)$, it is again local and covariant, 
\begin{align*}
\widehat{C}(N)\ &=\ \sum_{x\in\Sigma}N(x)\widehat{C}_x,\\
\widehat{C}_x {\cal H}_{V}\ &\subseteq \ {\cal H}_{V}\\
\widehat{C}_x\rvert_{{\cal H}_{V}}\ & =\ 0, \ \ \ {\rm unless}\ \ \ x\in V,\\
U_{f}\widehat{C}_xU^{-1}_f\ &=\ \widehat{C}_{f(x)}.
\end{align*}
\subsection{Solutions to the quantum constraints.}
\label{se_sol}
Suppose the quantum constraint operators $\widehat{C}_x$, $x\in\Sigma$, are essentially self-adjoint. 
Since 
$$ [\widehat{C}_x,\widehat{C}_{x'}]\ =\ 0, $$
every subspace ${\cal H}_{\{x_1,\ldots ,x_m\}}$ can be decomposed using the spectral decomposition
of the operators $\widehat{C}_{x_I}$, $I=1,\ldots ,m$,
$$ {\cal H}_{\{x_1,\ldots ,x_m\}}\ =\ \int^\oplus d\mu(c_1)\ldots d\mu(c_m) {\cal H}_{\{x_1,\ldots ,x_m\}}^{c_1\ldots c_m}. $$
The elements of the subspace 
$$ {\cal H}_{\{x_1,\ldots ,x_m\}}^{0\ldots 0}$$
are solutions to the  quantum scalar constraint. If $(c_1,\ldots ,c_m)=(0,\ldots ,0)$ is a point of the measure zero,
then some continuity in the map
$$ (c_1,\ldots ,c_m)\ \mapsto\ {\cal H}_{\{x_1,\ldots ,x_m\}}^{c_1\ldots c_m}$$
is used to determine individual spaces ${\cal H}_{\{x_1,\ldots ,x_m\}}^{c_1\ldots c_m}$. 
In the general case,  
$${\cal H}_{\{x_1,\ldots ,x_m\}}^{0\ldots 0}\ \subset\ (\eta(\cyl)_{x_1,\ldots ,x_m})^*.$$ 
The elements are (finite or formal infinite) linear combinations 
\begin{equation} \label{H0000}{\cal H}_{V}^{0\ldots 0} \ni \Psi\ =\ \sum_{[\gamma]\in[\gamma(V)]} \eta(\Psi_\gamma) 
\end{equation}
where $[\gamma]$ ranges over the set of Diff$(\Sigma)_V$ equivalence classes of the graphs
with vertices $V$.  In fact, there is a natural embedding 
$$ {\cal H}_{V}^{0\ldots 0}\rightarrow \cyl^* , $$
in $\cyl^*$, the  infinite formal sum (\ref{H0000}) becomes  a well-defined element.
      
To turn elements of   ${\cal H}_{V}^{0\ldots 0}$ into solutions to the quantum diffeomorphism
constraint we average them with respect to the remaining diffeomomorphisms
$$ \Psi\ =\ \sum_{[\gamma]\in[\gamma(V)]} \eta(\Psi_\gamma) \ \mapsto \  \tilde{\eta}(\Psi)\ =\  \sum_{[\gamma]} \sum_{[f]}\eta(U_f\Psi_\gamma)$$ 
where the last sum ranges
\begin{equation*}
 [f]\ \in\ {\rm Diff}(\Sigma)\,/\,  {\rm Diff}(\Sigma)_{\ve(\gamma)}.
\end{equation*}
The result is the subspace
$$ \tilde{\eta}({\cal H}_{\ve(\gamma)}^{0\ldots 0})\ \subset\ \cyl^*, $$
and its elements are Diff$(\Sigma)$-invariant. On the other hand, the operator $\widehat{C}(N)$ we have defined
can be applied directly on each Diff$(\Sigma)$ invariant element of $\cyl^*$. Then, for every $\Psi\in {\cal H}_{V}^{0\ldots 0}$
\begin{align} \widehat{C}(N) \tilde{\eta}(\Psi)\ &=\  \sum_{[\gamma]} \sum_{[f]}\hat{C}(N)\eta(U_f\Psi_\gamma)\nonumber\\  
&=\ \sum_{[\gamma]} \sum_{[f]}U_f\hat{C}(N\circ f)\eta(\Psi_\gamma)\ =\ 0. 
  \end{align}
Solving the Gauss constraint is ensured either by restricting ${\cal H}$ to the Yang-Mills gauge-invariant 
elements, or by introducing a third rigging map, integration with respect to the SU(2) transformations in $V$
for each space  ${\cal H}_{V}^{0\ldots 0}$.\newline

\section{Summary and outlook}
\label{se_sum}
In this article, we have introduced a new Hilbert space $\mathcal{H}_{\new}$ of quantum states for the gravitational field. It can be decomposed into sectors
\begin{equation*}
{\cal H}_{\new}\ =\ \overline{\bigoplus_{V\in {\rm FS}(\Sigma)}{\cal H}_{V}}
\end{equation*}
where the states in ${\cal H}_{V}$ are invariant under all the spatial diffeomorphisms that leave invariant the finite set $V$. 

Using the ideas of \cite{gr-qc/9606089,gr-qc/9606090}, together with the class of regularizations introduced in \cite{Ashtekar:2004eh}, we were able to find quantizations of the scalar constraint of pure gravity $\widehat{C}(N)$  as operators leaving ${\cal H}_{\new}$ invariant. This removes a longstanding technical problem, as previous quantizations were 
either defined on the kinematic Hilbert space $\mathcal{H}_{\rm kin}$ without the possibility to directly remove the regulator, or on the Hilbert space of fully diffeomorphism invariant states $\mathcal{H}_{\rm diff}$, which is not left invariant under the action of the constraint with nonconstant lapse.

In our setup, it is straightforward to symmetrize the operator, see (\ref{sa1},\ref{sa2}). Moreover, one can immediately work out the commutation relations. Since 
\begin{gather}
\label{sum}
\widehat{C}(N)\ =\ \sum_{c\in \Sigma}N(x)\widehat{C}_{x}, \\
\label{comm}
[\widehat{C}_{x},\widehat{C}_{x'}]\ =\ 0
\end{gather}
we find 
\begin{equation}
\label{zero}
[\widehat{C}(M), \widehat{C}(N)]\ =\ 0. 
\end{equation}
To discuss the question of anomalies of this quantization, one would thus have to investigate the quantization of the diffeomorphism generator which would classically result from the Poisson bracket of two scalar constraints, as has been done for Thiemann's quantization, \cite{gr-qc/9705017,gr-qc/9710016,Gambini:1997bc}. It is interesting to note that \eqref{zero} immediately results for any quantization of the form \eqref{sum} under the reasonable condition \eqref{comm}. 

We emphasize again that Thiemann also defines a symmetric constraint, albeit, in a sense, at finite regulator \cite{gr-qc/9606090}. A solution space of this constraint can be defined in $\mathcal{H}_{\rm diff}$. It is a very interesting -- and open -- question, how solutions to the new scalar constraint on 
$\mathcal{H}_{\new}$ relate to those of Thiemann's symmetric scalar constraint \cite{gr-qc/9606090}. Similarly, one should symmetrize Thiemann's non-symmetric scalar constraint on $\mathcal{H}_{\rm diff}$ for the case of \emph{constant} lapse function and compare to our proposal.

There is a very interesting different line of thought, \cite{Laddha:2010wp,Laddha:2011mk,Tomlin:2012qz,Varadarajan:2012re}, which also suggests that one should use a different Hilbert space to represent the (diffeomorphism and scalar) constraints. Those methods carry the additional benefit that they address the question of anomalies in a direct fashion. What connection, if any, they have to the constructions of the present article, remains to be seen.

\begin{acknowledgments}
This work was partially supported by the grant of the Polish Narodowe Centrum Nauki nr
2011/02/A/ST2/00300, Foundation for Polish Science and by the Emerging Fields Project \emph{Quantum Geometry} of the Friedrich-Alexander University Erlangen-N\"urnberg. JL thanks members of Institute for Quantum Gravity at the Friedrich-Alexander University Erlangen-N\"urnberg where this work was started for hospitality. HS would also like to thank organizers of the 1st Conference of the Polish Society on GR, where part of this work was completed, for hospitality.   

\end{acknowledgments}

\end{document}